\documentclass[conference,a4paper]{IEEEtran}
%

\usepackage{amsmath,amsthm,amssymb,dsfont}
\usepackage{graphicx,enumerate,url,multirow,multicol}

\usepackage{epsfig}
\usepackage{epstopdf}
\usepackage{tensor}
\usepackage{mystyle}

\newcommand{\reg}{R}
\newcommand{\norma}[1]{\norm{#1}_{\Aa}}

\newcommand{\cA}{C_{\Aa}}
\newcommand{\cPhi}{C_{\Phi}}
\newcommand{\icPhi}{{\cPhi}^{-1}}

\newcommand{\Tx}{{T}}
\newcommand{\ex}{{e}}
\newcommand{\Txper}{{S}}
\newcommand{\Txo}{{T_{0}}}
\newcommand{\exo}{{e_{0}}}
\newcommand{\Txoper}{{S_{0}}}
\newcommand{\st}{\quad\text{s.t.}\quad}
\newcommand{\IC}{\mathrm{IC}}

\newcommand{\GT}{\Gamma}
\newcommand{\AT}{\Xi}
\newcommand{\SC}[1]{{\rm(SC$_{#1}$)}}
\newcommand{\INJ}[1]{{\rm(INJ$_{#1}$)}}

\newcommand{\myparagraph}[1]{\noindent{\bf{#1}~}}

\graphicspath{{./images/}}

\begin{document}
%
\title{Stable Recovery with Analysis Decomposable Priors}

\author{
\begin{tabular}{@{}c@{}@{}cc@{}@{}c@{}}
Jalal M. Fadili & Gabriel Peyr\'e and Samuel Vaiter & Charles-Alban Deledalle & Joseph Salmon \\
GREYC & CEREMADE & IMB & LTCI \\
CNRS-ENSICAEN-Univ. Caen & CNRS-Univ. Paris-Dauphine & CNRS-Univ. Bordeaux 1 & CNRS-T\'el\'ecom ParisTech \\
Caen, France & Paris, France & Bordeaux, France & Paris, France \\
\end{tabular}
}


%


\maketitle

\begin{abstract}
In this paper, we investigate in a unified way the structural properties of solutions to inverse problems. These solutions are regularized by the generic class of semi-norms defined as a decomposable norm composed with a linear operator, the so-called analysis type decomposable prior. This encompasses several well-known analysis-type regularizations such as the discrete total variation (in any dimension), analysis group-Lasso or the nuclear norm. Our main results establish sufficient conditions under which uniqueness and stability to a bounded noise of the regularized solution are guaranteed. Along the way, we also provide a strong sufficient uniqueness result that is of independent interest and goes beyond the case of decomposable norms.
\end{abstract}



%
\IEEEpeerreviewmaketitle

\section{Introduction}
\subsection{Problem statement}
Suppose we observe 
\eq{
y = \Phi x_0 + w, \qwhereq \norm{w}_2 \leq \varepsilon ~,
}
where $\Phi$ is a linear operator from $\RR^N$ to $\RR^M$ that may have a non-trivial kernel.
We want to robustly recover an approximation of $x_0$ by solving the optimization problem
\eql{
\label{eq:lassoA}
x^\star \in \uArgmin{x \in \RR^N} \tfrac{1}{2}\norm{y - \Phi x}_2^2 + \la \reg(x) ~,
}
where
\[
\reg(x) := \norma{L^* x} ~,
\]
with $L: \RR^P \to \RR^N$ a 
linear operator, and $\norma{\cdot}: \RR^P \to \RR^+$ is a decomposable norm in the sense of~\cite{CandesDecomposable12}. Decomposable regularizers are intended to promote solutions conforming to some notion of simplicity/low complexity that complies with that of $u_0=L^* x_0$. This motivates the following definition of these norms. Throughout the paper, given a subspace $V \subset \RR^P$, we will use the shorthand notation $L_V=L\proj_{V}$, $L_V^*=\proj_{V}L^*$, and $\alpha_V=\proj_{V}\alpha$ for any vector $\alpha \in \RR^P$, where $\proj_{V}$ (resp. $\proj_{V^\perp}$) is the orthogonal projector on $V$ (resp. on its orthogonal complement $V^\perp$). 
\begin{defn}\label{def:decomposable-norm}
A norm $\norma{\cdot}$ is \emph{decomposable} at $u \in \RR^P$ if:
\begin{enumerate}[(i)]
\item there is a subspace $T \subset \RR^P$ and a vector $e \in T$ such that
      \eq{
        \partial \norma{\cdot}(u) =
        \enscond{\alpha \in \RR^P}
        {\alpha_{T} = e \quad \text{and} \quad \norma{\alpha_{T^\perp}}^* \leq 1}
      }
\item and for any $z \in T^\perp$, $\norma{z} = \sup_{v \in T^\perp, \norma{v}^* \leq 1} \dotp{v}{z}$, where $\norma{\cdot}^*$ is the dual norm of $\norma{\cdot}$.
\end{enumerate}
\end{defn}
From this definition, it can be easily proved, using Fenchel identity, that $u \in T$ whenever $\norma{\cdot}$ is decomposable at $u$. Popular examples covered by decomposable regularizers are the $\ell_1$-norm, the $\ell_1$-$\ell_2$ group sparsity norm, and the nuclear norm~\cite{CandesDecomposable12}. 

\subsection{Contributions and relation to prior work}
In this paper, we give a strong sufficient condition under which~\eqref{eq:lassoA} admits a unique minimizer. From this, sufficient uniqueness conditions are derived. Then we develop results guaranteeing a stable approximation of $x_0$ from the noisy measurements $y$ by solving~\eqref{eq:lassoA}, with an $\ell_2$-error that comes within a factor of the noise level $\varepsilon$. This goes beyond~\cite{CandesDecomposable12} who considered identifiability under a generalized irrepresentable condition in the noiseless case with $L=\Id$. $\ell_2$-stability for a class of decomposable priors closely related to Definition~\ref{def:decomposable-norm}, is also studied in~\cite{WainwrightDecomposable12} for $L=\Id$ and general sufficiently smooth data fidelity. Their stability results require however stronger assumptions than ours (typically a restricted strong convexity which becomes a type of restricted eigenvalue property for linear regression with quadratic data fidelity). The authors in~\cite{Chandrasekaran12} provide sharp estimates of the number of generic measurements required for exact and $\ell_2$-stable recovery of models from random partial information by solving a constrained form of~\eqref{eq:lassoA} regularized by atomic norms. This is however restricted to the compressed sensing scenario. Our results generalize the stability guarantee of~\cite{Haltmeier12} established when the decomposable norm is $\ell_1$ and $L^*$ is the analysis operator of a frame. A stability result for general sublinear functions $R$ is given in~\cite{GrasmairPositively11}. The stability is however measured in terms of $R$, and $\ell_2$-stability can only be obtained if $R$ is coercive, \ie $L^*$ is injective. 

At this stage, we would like to point out that although we carry out our analysis on the penalized form~\eqref{eq:lassoA}, our results remain valid for the data fidelity constrained version but obviously with different constants in the bounds. We omit these results for obvious space limitations.


\section{Uniqueness}
\label{sec:uniqueness}
\subsection{Main assumptions}
We first note that traditional coercivity and convexity arguments allow to show that the set of (global) minimizers of~\eqref{eq:lassoA} is a non-empty compact set if, and only if, $\ker(\Phi) \cap \ker(L^*) = \{0\}$.

The following assumptions will play a pivotal role in our analysis.\\

\myparagraph{Assumption \SC{x}} There exist $\eta \in \RR^M$ and $\alpha \in \partial\norma{\cdot}(L^*x)$ such that the following so-called source (or range) condition is verified:
\[
\Phi^*\eta=L\alpha \in \partial\reg(x) ~.
\]
\myparagraph{Assumption \INJ{T}} For a subspace $T \subset \RR^{P}$, $\Phi$ is injective on $\ker(L_{T^\perp}^*)$.\\

It is immediate to see that since $\ker(L^*) \subseteq \ker(L_{T^\perp}^*)$, \INJ{T} implies that the set of minimizers is indeed non-empty and compact.

\subsection{Strong Null Space Property}
We shall now give a novel strong sufficient uniqueness condition under which problem~\eqref{eq:lassoA} admits exactly one minimizer. 
\begin{thm}
\label{theo:uniquenessCNS}
For a minimizer $x^\star$ of~\eqref{eq:lassoA}, let $\Tx$ and $\ex$ be the subspace and vector in Definition~\ref{def:decomposable-norm} associated to $u^\star=L^* x^\star$, and denote $\Txper=\Tx^\perp$. $x^\star$ is the unique minimizer of~\eqref{eq:lassoA} if
\eq{
\dotp{L_{\Tx}^*h}{\ex} < \norma{L_{\Txper}^*h}^*, \qquad \forall h \in \ker(\Phi) \setminus \{0\} ~.
}
\end{thm}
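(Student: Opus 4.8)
The plan is to argue by contradiction: if~\eqref{eq:lassoA} has a minimizer $x' \neq x^\star$, set $h := x' - x^\star \neq 0$, and I will show the asserted strict inequality is violated at $h$ or at $-h$. Because $z \mapsto \tfrac{1}{2}\norm{y-z}_2^2$ is strictly convex while $\reg$ is convex, comparing the objective value at $x^\star$, at $x'$, and at $\tfrac{1}{2}(x^\star+x')$ forces $\Phi x^\star = \Phi x'$ (so $h \in \ker(\Phi)$) and then, the two data terms being equal and both objective values minimal, also $\reg(x^\star)=\reg(x')$, i.e. $\norma{u^\star+L^*h}=\norma{u^\star}$ with $u^\star:=L^*x^\star$. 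On the other hand, $x^\star$ being a minimizer, assumption \SC{x^\star} holds (it is the first-order optimality condition): there exist $\eta$ and $\alpha^\star \in \partial\norma{\cdot}(u^\star)$ with $\Phi^*\eta = L\alpha^\star$, whence $\dotp{\alpha^\star}{L^*h}=\dotp{\eta}{\Phi h}=0$ since $h \in \ker(\Phi)$.

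Next I turn these facts into an equation for $h$. For every $\alpha \in \partial\norma{\cdot}(u^\star)$, the subgradient inequality at $u^\star$ combined with $\norma{u^\star+L^*h}=\norma{u^\star}$ gives $\dotp{\alpha}{L^*h} \le 0$; together with the lower bound $\dotp{\alpha^\star}{L^*h}=0$ this forces $\sup_{\alpha \in \partial\norma{\cdot}(u^\star)}\dotp{\alpha}{L^*h}=0$. I then evaluate this supremum via decomposability: by Definition~\ref{def:decomposable-norm}(i) an admissible $\alpha$ writes $\alpha = \ex + \alpha_{\Txper}$ with $\ex$ fixed and $\alpha_{\Txper} \in \Txper$ constrained only by $\norma{\alpha_{\Txper}}^*\le 1$, so the maximization decouples over $\Tx$ and $\Txper$, and Definition~\ref{def:decomposable-norm}(ii) applied to $L_{\Txper}^*h \in \Txper$ evaluates the $\Txper$-part; one obtains $\dotp{L_{\Tx}^*h}{\ex}+\norma{L_{\Txper}^*h}^*=0$.

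To conclude, $\ker(\Phi)$ is a linear subspace, so $-h$ is also a nonzero element of $\ker(\Phi)$, and the hypothesis of the theorem applies at $-h$; using linearity of $L_{\Tx}^*$ and $L_{\Txper}^*$ together with $\norma{-L_{\Txper}^*h}^*=\norma{L_{\Txper}^*h}^*$, it reads $\dotp{L_{\Tx}^*h}{\ex} > -\norma{L_{\Txper}^*h}^*$, in direct contradiction with the equality just derived. Hence~\eqref{eq:lassoA} cannot have a second minimizer, and $x^\star$ is the unique one.

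The step I expect to require the most care is the decoupling and evaluation of $\sup_{\alpha}\dotp{\alpha}{L^*h}$ over the subdifferential: this is precisely where both parts of Definition~\ref{def:decomposable-norm} are genuinely used, and one must check that the $\Txper$-contribution is exactly the term appearing on the right-hand side of the statement. Everything else is routine convexity reasoning, apart from the small but essential observation that the symmetry $-\ker(\Phi)=\ker(\Phi)$ is what converts the equality coming from the optimality of $x^\star$ into an outright violation of the strict inequality.
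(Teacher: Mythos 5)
Your proof is correct and is essentially the paper's argument in contradiction form: the identity $\sup_{\alpha \in \partial\norma{\cdot}(u^\star)}\dotp{\alpha}{L^*h} = \dotp{\ex}{L_{\Tx}^*h} + \norma{L_{\Txper}^*h}$ (the paper obtains it as the directional derivative $R'(x^\star;h)$, you obtain it by squeezing the subgradient inequality against the first-order optimality certificate) and the final $h \mapsto -h$ symmetry step are exactly the paper's two key ingredients. One minor remark: Definition~\ref{def:decomposable-norm}(ii) evaluates the $\Txper$-part of the supremum as the \emph{primal} norm $\norma{L_{\Txper}^*h}$, so the star you (and the theorem statement) attach to it is a notational slip inherited from the paper itself --- the paper's own derivation produces $\norma{L_{\Txper}^*h}$.
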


The above condition is a strong generalization of the Null Space Property well known in $\ell_1$ regularization~\cite{DonohoHuo01}.

\subsection{Sufficient uniqueness conditions}
\subsubsection{General case}
A direct consequence of the above theorem is the following corollary.
\begin{cor}
\label{cor:uniqueness}
For a minimizer $x^\star$ of~\eqref{eq:lassoA}, let $\Tx$ and $\ex$ be the subspace and vector in Definition~\ref{def:decomposable-norm} associated to $u^\star=L^* x^\star$, and denote $\Txper=\Tx^\perp$. Assume that \SC{x^\star} is verified with $\norma{\alpha_{\Txper}}^* < 1$, and that \INJ{\Tx} holds. Then, $x^\star$ is the unique minimizer of~\eqref{eq:lassoA}.
\end{cor}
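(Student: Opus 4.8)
My plan is to obtain Corollary~\ref{cor:uniqueness} as a direct consequence of Theorem~\ref{theo:uniquenessCNS}: I will verify that the strong null space condition of that theorem is a consequence of \SC{x^\star} (with the strict requirement $\norma{\alpha_{\Txper}}^* < 1$) together with \INJ{\Tx}, after which uniqueness of $x^\star$ is immediate. So, with $x^\star$ a minimizer and $\Tx,\ex,\Txper$ as in the statement, I fix an arbitrary $h \in \ker(\Phi)\setminus\{0\}$ and aim at the inequality of Theorem~\ref{theo:uniquenessCNS} for this $h$.

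I would first unpack the two hypotheses. By \SC{x^\star} there are $\eta \in \RR^M$ and $\alpha \in \partial\norma{\cdot}(u^\star)$ with $\Phi^*\eta = L\alpha$ (which lies in $\partial\reg(x^\star)$ automatically, via the chain rule $\partial\reg(x^\star) = L\,\partial\norma{\cdot}(u^\star)$). Since $\alpha$ is a subgradient of the decomposable norm at $u^\star$, part (i) of Definition~\ref{def:decomposable-norm} forces $\alpha_{\Tx} = \ex$, while the corollary additionally supplies the \emph{strict} bound $\norma{\alpha_{\Txper}}^* < 1$. By \INJ{\Tx}, $\Phi$ is injective on $\ker(L_{\Txper}^*)$; since $h \ne 0$ and $\Phi h = 0$, it follows that $h \notin \ker(L_{\Txper}^*)$, i.e. $L_{\Txper}^* h \ne 0$, hence its norm is strictly positive --- this is what will upgrade a non-strict estimate to a strict one.

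The crux is a short orthogonality computation. Since $h \in \ker(\Phi)$,
\eq{
0 = \dotp{\eta}{\Phi h} = \dotp{\Phi^*\eta}{h} = \dotp{L\alpha}{h} = \dotp{\alpha}{L^* h} .
}
Decomposing $\alpha = \ex + \alpha_{\Txper}$ and $L^* h = L_{\Tx}^* h + L_{\Txper}^* h$ along the orthogonal splitting $\RR^P = \Tx \oplus \Txper$, the cross terms drop out and this identity reduces to $\dotp{\ex}{L_{\Tx}^* h} = -\dotp{\alpha_{\Txper}}{L_{\Txper}^* h}$. Bounding the right-hand side by the duality inequality between $\norma{\cdot}$ and its dual norm, and then using the strict bound $\norma{\alpha_{\Txper}}^* < 1$ together with $L_{\Txper}^* h \ne 0$, yields precisely the strict inequality required by Theorem~\ref{theo:uniquenessCNS}; that theorem then gives that $x^\star$ is the unique minimizer of~\eqref{eq:lassoA}.

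I do not expect a real obstacle: modulo Theorem~\ref{theo:uniquenessCNS}, this is essentially a one-line deduction. The only delicate point is turning ``$\le$'' into ``$<$'', and this is exactly where the two strengthenings enter --- \INJ{\Tx} prevents $L_{\Txper}^* h$ from vanishing on $\ker(\Phi)\setminus\{0\}$, and the strict source condition $\norma{\alpha_{\Txper}}^* < 1$ (rather than the inequality $\le 1$ that every $\alpha \in \partial\norma{\cdot}(u^\star)$ satisfies by Definition~\ref{def:decomposable-norm}) provides the strict gap in the duality bound.
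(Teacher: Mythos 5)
Your proposal is correct and follows essentially the same route as the paper: the orthogonality identity $\dotp{\alpha}{L^*h}=\dotp{\Phi h}{\eta}=0$, the split along $\Tx\oplus\Txper$, the dual-norm bound, and the strictness coming from \INJ{\Tx} (so $L_{\Txper}^*h\neq 0$) together with $\norma{\alpha_{\Txper}}^*<1$, all exactly as in the paper's proof of Corollary~\ref{cor:uniqueness}.
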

In fact, it turns out that the above two results are proved without requiring some restrictive implications of Definition~\ref{def:decomposable-norm}(ii) of decomposable norms, and are therefore valid for a much larger class of regularizations. This can be clearly checked in the arguments used in the proofs.

\subsubsection{Separable case}
\begin{defn}
\label{def:separable-norm}
The decomposable norm $\norma{\cdot}$ is separable on the subspace $T^\perp=S=V \oplus W \subset \RR^P$ if for any $u \in \RR^P$, $\norma{u_{T^\perp}}=\norma{u_V}+\norma{u_W}$.
\end{defn}
Separability as just defined is fulfilled for several decomposable norms such as the $\ell_1$ or $\ell_1-\ell_p$ norms, $1 \leq p < +\infty$.

The non-saturation condition on the dual certificate required in Corollary~\ref{cor:uniqueness} can be weakened to hold only on a subspace $V \subset \Txper$ and the conclusions of the corollary remain valid, and assuming a stronger restricted injectivity assumption. We have the following corollary. 
\begin{cor}
\label{cor:uniquenesssep}
Assume that $\norma{\cdot}$ is also separable, with $\Txper=V \oplus W$, such that \SC{x^\star} is verified with $\norma{\alpha_{V}}^* < 1$, and \INJ{V} holds. Then, $x^\star$ is the unique minimizer of~\eqref{eq:lassoA}.
\end{cor}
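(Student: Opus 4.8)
I would follow the same variational route that underlies Corollary~\ref{cor:uniqueness}, the only change being that strict dual feasibility is now available merely on the piece $V$ of the separable splitting $\Txper = V \oplus W$, so the step that forces the complement of $\Tx$ to vanish must be localised to $V$ by means of Definition~\ref{def:separable-norm}. Let $x'$ be any other minimizer of~\eqref{eq:lassoA} and set $h = x' - x^\star$. Since $t \mapsto \tfrac12\norm{y-t}_2^2$ is strictly convex, standard convexity arguments give $\Phi x' = \Phi x^\star$, hence $h \in \ker(\Phi)$, and, both points minimizing with the same data term, $\reg(x')=\reg(x^\star)$, i.e.\ $\norma{u^\star + L^* h} = \norma{u^\star}$ with $u^\star = L^* x^\star$. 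By \SC{x^\star} there are $\eta$ and $\alpha \in \partial\norma{\cdot}(u^\star)$ with $\Phi^*\eta = L\alpha$; testing against $h \in \ker(\Phi)$ yields $\dotp{\alpha}{L^* h} = \dotp{\eta}{\Phi h} = 0$, which, since $\alpha_{\Tx} = \ex$, reads $\dotp{\ex}{L_{\Tx}^* h} = -\dotp{\alpha_{\Txper}}{L_{\Txper}^* h}$.

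The core is the decomposability lower bound. Picking, via Definition~\ref{def:decomposable-norm}(ii), a $z \in \Txper$ with $\norma{z}^* \leq 1$ and $\dotp{z}{L_{\Txper}^* h} = \norma{L_{\Txper}^* h}$, Definition~\ref{def:decomposable-norm}(i) makes $\ex + z$ a subgradient of $\norma{\cdot}$ at $u^\star$, so that $\norma{u^\star} = \norma{u^\star + L^* h} \geq \dotp{\ex+z}{u^\star+L^* h} = \norma{u^\star} + \dotp{\ex}{L_{\Tx}^* h} + \norma{L_{\Txper}^* h}$; together with the identity above, this gives $\norma{L_{\Txper}^* h} \leq \dotp{\alpha_{\Txper}}{L_{\Txper}^* h}$. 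I would then bring in separability: $\norma{L_{\Txper}^* h} = \norma{L_V^* h} + \norma{L_W^* h}$, while $\dotp{\alpha_{\Txper}}{L_{\Txper}^* h} = \dotp{\alpha_V}{L_V^* h} + \dotp{\alpha_W}{L_W^* h} \leq \norma{\alpha_V}^*\norma{L_V^* h} + \norma{\alpha_W}^*\norma{L_W^* h}$, and the dual counterpart of Definition~\ref{def:separable-norm} (the dual of a separable norm being the max of its two dual pieces) gives $\norma{\alpha_W}^* \leq \norma{\alpha_{\Txper}}^* \leq 1$. Cancelling $\norma{L_W^* h}$ leaves $\norma{L_V^* h} \leq \norma{\alpha_V}^*\norma{L_V^* h}$, so $\norma{\alpha_V}^* < 1$ forces $L_V^* h = 0$. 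Thus $h \in \ker(\Phi)$ with $L_V^* h = 0$, and \INJ{V} then forces $h = 0$, proving uniqueness.

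The step I expect to be the main obstacle is the separable bookkeeping in the core argument: one must check that the inner products and norms genuinely split along $V \oplus W$ (transparent when that sum is orthogonal, and otherwise to be carried out with the projectors attached to the decomposition rather than the orthogonal ones), and, above all, that $\norma{\alpha_W}^* \leq 1$ is legitimately inherited from $\norma{\alpha_{\Txper}}^* \leq 1$. This last point is exactly where Definition~\ref{def:separable-norm} gets consumed, through the fact that Fenchel duality turns an additive splitting of the primal norm on $\Txper$ into a max-splitting of the dual norm; note also that, unlike Theorem~\ref{theo:uniquenessCNS} and Corollary~\ref{cor:uniqueness}, the present argument does invoke Definition~\ref{def:decomposable-norm}(ii), through the choice of $z$. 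Everything else is the same chain of Fenchel/H\"older inequalities already behind Corollary~\ref{cor:uniqueness}.
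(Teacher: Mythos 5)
Your proof is correct and takes essentially the same route as the paper's: the source condition gives $\dotp{\alpha}{L^*h}=0$ on $\ker(\Phi)$, separability splits the H\"older bound over $V$ and $W$, $\norma{\alpha_W}^*\leq 1$ absorbs the $W$-term, strict dual feasibility on $V$ forces $L_V^*h=0$, and \INJ{V} finishes. The only differences are presentational — the paper verifies the strong null space property of Theorem~\ref{theo:uniquenessCNS} for every $h\in\ker(\Phi)\setminus\{0\}$ rather than arguing directly on the difference of two minimizers as you do, and your justification of $\norma{\alpha_W}^*\leq\norma{\alpha_{\Txper}}^*$ via the max-splitting of the dual of a separable norm is, if anything, cleaner than the paper's.
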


\section{Stability to noise}
\label{sec:stable}

\subsection{Main result}

\subsubsection{General case}
We are now ready to state our main stability results.
\begin{thm}
\label{theo:boundestimnonframe}
Let $\Txo$ and $\exo$ be the subspace and vector in Definition~\ref{def:decomposable-norm} associated to $u_0=L^* x_0$, and denote $\Txoper=\Txo^\perp$. Assume that \SC{x_0} is verified with $\norma{\alpha_{\Txoper}}^* < 1$, and that \INJ{\Txo} holds. Then, choosing $\lambda=c\varepsilon$, $c > 0$, the following holds for any minimizer $x^\star$ of~\eqref{eq:lassoA}
\eq{
\norm{x^\star-x_0}_2 \leq C\varepsilon ~,
}
where $C = C_1\pa{2+c\norm{\eta}_2}+C_2\tfrac{(1+c\norm{\eta}_2/2)^2}{c\pa{1-\norma{\alpha_{\Txoper}}^*}}$, and $C_{1} > 0$ and $C_2 > 0$ are constants independent of $\eta$ and $\alpha$.
\end{thm}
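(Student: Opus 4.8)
\noindent\emph{Proof strategy.} The plan is to turn the optimality of $x^\star$, together with the source condition \SC{x_0}, into a single inequality that controls \emph{simultaneously} the prediction error $\norm{\Phi(x^\star-x_0)}_2$ and the analysis defect $\norma{L_{\Txoper}^*(x^\star-x_0)}$, and then to convert these two estimates into an $\ell_2$-bound on $x^\star-x_0$ by means of the restricted injectivity \INJ{\Txo}. Concretely, set $h=x^\star-x_0$, so $y-\Phi x^\star=w-\Phi h$; writing that the objective of~\eqref{eq:lassoA} at $x^\star$ does not exceed its value at $x_0$ and expanding the square gives the basic inequality $\tfrac12\norm{\Phi h}_2^2+\lambda(\reg(x^\star)-\reg(x_0))\le\dotp{w}{\Phi h}$. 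To bound $\reg(x^\star)-\reg(x_0)$ from below I would use convexity of $\reg=\norma{\cdot}\circ L^*$: by the chain rule $L\beta\in\partial\reg(x_0)$ for every $\beta\in\partial\norma{\cdot}(u_0)$, and since $u_0$ is a point of decomposability I can choose $\beta$ with $\beta_{\Txo}=\exo$ whose $\Txoper$-component lies in $\Txoper$ and realizes the supremum in Definition~\ref{def:decomposable-norm}(ii), so that $\dotp{\beta_{\Txoper}}{L_{\Txoper}^*h}=\norma{L_{\Txoper}^*h}$; this yields $\reg(x^\star)-\reg(x_0)\ge\dotp{\exo}{L_{\Txo}^*h}+\norma{L_{\Txoper}^*h}$. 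Then, using \SC{x_0} as $\Phi^*\eta=L\alpha$ with $\alpha_{\Txo}=\exo$, I would rewrite $\dotp{\exo}{L_{\Txo}^*h}=\dotp{\eta}{\Phi h}-\dotp{\alpha_{\Txoper}}{L_{\Txoper}^*h}$, bound the last term by $\norma{\alpha_{\Txoper}}^*\norma{L_{\Txoper}^*h}$ and $\dotp{\eta}{\Phi h}$ by $-\norm{\eta}_2\norm{\Phi h}_2$. Substituting everything into the basic inequality and moving $\lambda\dotp{\eta}{\Phi h}$ to the right-hand side produces the master inequality
\[
\tfrac12\norm{\Phi h}_2^2+\lambda\bigl(1-\norma{\alpha_{\Txoper}}^*\bigr)\norma{L_{\Txoper}^*h}\le\bigl(\varepsilon+\lambda\norm{\eta}_2\bigr)\norm{\Phi h}_2 .
\]

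From this single inequality I would read off the two estimates. Since $1-\norma{\alpha_{\Txoper}}^*>0$, discarding the second left-hand term gives $\norm{\Phi h}_2\le 2(\varepsilon+\lambda\norm{\eta}_2)$; completing the square in $\norm{\Phi h}_2$ instead gives $\norma{L_{\Txoper}^*h}\le(\varepsilon+\lambda\norm{\eta}_2)^2/\bigl(2\lambda(1-\norma{\alpha_{\Txoper}}^*)\bigr)$, hence also $\norm{L_{\Txoper}^*h}_2\le C'(\varepsilon+\lambda\norm{\eta}_2)^2/\bigl(2\lambda(1-\norma{\alpha_{\Txoper}}^*)\bigr)$ by equivalence of norms on $\RR^P$. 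Choosing $\lambda=c\varepsilon$ makes both right-hand sides of order $\varepsilon$; a slightly more careful splitting of $\dotp{w-\lambda\eta}{\Phi h}$ recovers the precise constants $2+c\norm{\eta}_2$ and $(1+c\norm{\eta}_2/2)^2$ of the statement, and it is exactly the strict inequality $\norma{\alpha_{\Txoper}}^*<1$ that keeps the second bound finite and generates the $1/(1-\norma{\alpha_{\Txoper}}^*)$ factor in $C_2$.

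It remains to pass from $\norm{\Phi h}_2$ and $\norm{L_{\Txoper}^*h}_2$ to $\norm{h}_2$, which is where \INJ{\Txo} enters. Put $\mathcal{K}=\ker(L_{\Txoper}^*)$ and split $h=\proj_{\mathcal{K}}h+\proj_{\mathcal{K}^\perp}h$. On $\mathcal{K}^\perp$ the map $L_{\Txoper}^*$ is injective and $L_{\Txoper}^*\proj_{\mathcal{K}^\perp}h=L_{\Txoper}^*h$, so $\norm{\proj_{\mathcal{K}^\perp}h}_2\le\kappa^{-1}\norm{L_{\Txoper}^*h}_2$ for some $\kappa>0$; consequently $\Phi\proj_{\mathcal{K}}h=\Phi h-\Phi\proj_{\mathcal{K}^\perp}h$ has $\ell_2$-norm at most $\norm{\Phi h}_2+\norm{\Phi}\kappa^{-1}\norm{L_{\Txoper}^*h}_2$, and since $\ker(L^*)\subseteq\mathcal{K}$ and $\Phi$ is injective on $\mathcal{K}$ by \INJ{\Txo}, there is $\mu>0$ with $\norm{\proj_{\mathcal{K}}h}_2\le\mu^{-1}\norm{\Phi\proj_{\mathcal{K}}h}_2$. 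Adding the two contributions and inserting the order-$\varepsilon$ estimates of the previous paragraph yields $\norm{h}_2\le C\varepsilon$ with $C$ of the advertised form, where $C_1$ and $C_2$ collect only $\mu^{-1}$, $\kappa^{-1}$, the operator norm of $\Phi$ and the norm-equivalence constant on $\RR^P$, none of which depends on $\eta$ or $\alpha$. (Under these hypotheses Corollary~\ref{cor:uniqueness} also applies, so $x^\star$ is in fact the unique minimizer.)

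I expect the main obstacle to be the lower bound on $\reg(x^\star)-\reg(x_0)$: one must choose the subgradient $\beta$ so that its $\Txoper$-part both stays inside $\Txoper$ and aligns with $L_{\Txoper}^*h$ to realize $\norma{L_{\Txoper}^*h}$ --- this is precisely the content of Definition~\ref{def:decomposable-norm}(ii), which is not needed for the uniqueness corollaries but is needed here --- and then combine the three resulting inequalities so that the coefficient in front of $\norma{L_{\Txoper}^*h}$ comes out as $\lambda(1-\norma{\alpha_{\Txoper}}^*)$, strictly positive by assumption, rather than something that might vanish. Once the master inequality is in hand, everything downstream is the quadratic manipulation and the finite-dimensional linear-algebra decomposition sketched above.
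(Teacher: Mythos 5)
Your proposal is correct and follows essentially the same route as the paper: the paper packages your ``master inequality'' as two lemmata (a Bregman-distance/prediction-error bound from the basic optimality inequality plus \SC{x_0}, and the alignment argument choosing $v\in\Txoper$ with $\norma{v}^*\leq 1$ realizing $\norma{L_{\Txoper}^*h}$ so that $v+\exo\in\partial\norma{\cdot}(u_0)$, yielding the $1/(1-\norma{\alpha_{\Txoper}}^*)$ factor), and then performs exactly your decomposition of $x^\star-x_0$ along $\ker(L_{\Txoper}^*)$ and its orthogonal complement using \INJ{\Txo} and the restricted injectivity of $L_{\Txoper}^*$. The only differences are cosmetic (Bregman formalism versus a single combined inequality, and slightly different but equivalent bookkeeping of the constants absorbed into $C_1$ and $C_2$).
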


\begin{rem}[Separable case]
When the decomposable norm is also separable (see Corollary~\ref{cor:uniquenesssep}), the stability result of Theorem~\ref{theo:boundestimnonframe} remains true assuming that $\norma{\alpha_{V}}^*<1$ for $V \subset \Txoper$. This however comes at the price of the stronger restricted injectivity assumption \INJ{V}. To show this, the only thing to modify is the statement and the proof of Lemma~\ref{lem:nonsat} which can be done easily using similar arguments to those in the proof of Corollary~\ref{cor:uniquenesssep}.
\end{rem}

\subsubsection{Case of frames}
Suppose that $L^*$ is the analysis operator of a frame ($\ker(L^*)=\{0\}$) with lower bound $a > 0$, let $\tilde{L}$ be a dual frame. The following stability bound can be obtained whose proof is omitted for space limitations.

\begin{prop}\label{prop:boundestimframe}
Let $\Txo$ and $\exo$ be the subspace and vector in Definition~\ref{def:decomposable-norm} associated to $u_0=L^* x_0$, and denote $\Txoper=\Txo^\perp$. Assume that \SC{x_0} is verified with $\norma{\alpha_{\Txoper}}^* < 1$, and that $\Phi$ is injective on $\Im({\tilde{L}}_{\Txo})$. Then, choosing $\lambda=c\varepsilon$, $c > 0$, the following holds for any minimizer $x^\star$ of~\eqref{eq:lassoA}
\eq{
\norm{x^\star-x_0}_2 \leq C'\varepsilon ~,
}
where $C' = C_1\pa{2+c\norm{\eta}_2}+C_2'\tfrac{(1+c\norm{\eta}_2/2)^2}{c\pa{1-\norma{\alpha_{\Txoper}}^*}}$, and $C_{1} > 0$ and $C_2' > 0$ are constants independent of $\eta$ and $\alpha$.
\end{prop}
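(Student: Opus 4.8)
The plan is to follow the proof of Theorem~\ref{theo:boundestimnonframe} almost verbatim, isolating the two estimates that use no injectivity hypothesis and replacing the final restricted-injectivity step by one adapted to the dual frame $\tilde{L}$. Set $h = x^\star - x_0$, $u^\star = L^* x^\star$ and $u_0 = L^* x_0$, and recall that $u_0 \in \Txo$. From the optimality of $x^\star$ in~\eqref{eq:lassoA}, the feasibility $\norm{y-\Phi x_0}_2 \leq \varepsilon$, convexity of $\reg$ and \SC{x_0} (which gives $\reg(x^\star) \geq \reg(x_0) + \dotp{\eta}{\Phi h}$ since $\Phi^*\eta = L\alpha \in \partial\reg(x_0)$), one obtains, exactly as in the general case, a residual control of the form $\norm{\Phi h}_2 \leq C_0\pa{2 + c\norm{\eta}_2}\varepsilon$ for an absolute constant $C_0$. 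Likewise Lemma~\ref{lem:nonsat} -- which rests only on the decomposability of $\norma{\cdot}$ at $u_0$ and on $\norma{\alpha_{\Txoper}}^* < 1$ -- gives $\norma{u^\star_{\Txoper}} \leq C_0' \tfrac{(1+c\norm{\eta}_2/2)^2}{c\pa{1-\norma{\alpha_{\Txoper}}^*}}\varepsilon$. Neither bound uses the injectivity assumption, so both transfer unchanged to the present setting.

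It then remains to turn these two estimates into a bound on $\norm{h}_2$. Since $L^*$ is the analysis operator of a frame it is injective, and the dual frame satisfies $\tilde{L}L^* = \Id$, so $h = \tilde{L}(u^\star - u_0)$; decomposing $u^\star - u_0$ along $\RR^P = \Txo \oplus \Txoper$ and using $u_0 \in \Txo$,
\eq{
h = \tilde{L}_{\Txo}(u^\star-u_0) + \tilde{L}_{\Txoper}u^\star ~.
}
The second summand obeys $\norm{\tilde{L}_{\Txoper}u^\star}_2 \leq \norm{\tilde{L}}\,\norm{u^\star_{\Txoper}}_2$, which is of order $\varepsilon$ by the second estimate and equivalence of norms on the finite-dimensional space $\Txoper$. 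For the first summand, put $g = \tilde{L}_{\Txo}(u^\star-u_0) \in \Im(\tilde{L}_{\Txo})$; then $\Phi g = \Phi h - \Phi\tilde{L}_{\Txoper}u^\star$, hence $\norm{\Phi g}_2 \leq \norm{\Phi h}_2 + \norm{\Phi}\,\norm{\tilde{L}}\,\norm{u^\star_{\Txoper}}_2$, which is again of order $\varepsilon$ by the two transferred estimates. By hypothesis $\Phi$ is injective on the finite-dimensional subspace $\Im(\tilde{L}_{\Txo})$, so there is $\nu > 0$ with $\norm{\Phi v}_2 \geq \nu\norm{v}_2$ for every $v \in \Im(\tilde{L}_{\Txo})$, whence $\norm{g}_2 \leq \nu^{-1}\norm{\Phi g}_2$. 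Adding the two contributions yields $\norm{h}_2 \leq \norm{g}_2 + \norm{\tilde{L}_{\Txoper}u^\star}_2 \leq C'\varepsilon$, where the residual-driven terms assemble into $C_1\pa{2+c\norm{\eta}_2}$ and the $\Txoper$-component terms into $C_2'\tfrac{(1+c\norm{\eta}_2/2)^2}{c\pa{1-\norma{\alpha_{\Txoper}}^*}}$, with $C_1$ and $C_2'$ depending only on $\Phi$, $L$, $\tilde{L}$, on $\nu$ (hence on the dual frame and the lower frame bound $a$), and on finite-dimensional norm-equivalence constants, but not on $\eta$ or $\alpha$.

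The only genuinely new point relative to Theorem~\ref{theo:boundestimnonframe} is that $h$ no longer lives naturally in a fixed subspace of $\RR^N$: it is reconstructed from $u^\star - u_0$ through the dual frame, and the role played there by $\ker(L_{\Txoper}^*)$ together with \INJ{\Txo} is now played by $\Im(\tilde{L}_{\Txo})$ together with the injectivity of $\Phi$ on it. I expect the only real obstacle to be bookkeeping: checking that the dual-frame decomposition splits $h$ exactly into an $\Im(\tilde{L}_{\Txo})$ part and a part controlled by $\norma{u^\star_{\Txoper}}$, and then tracking how the constants of the two transferred estimates recombine into the stated expression for $C'$. Different choices of dual frame change $\nu$, and hence the absolute constants, but not the dependence on $\varepsilon$, $\eta$ and $\alpha$.
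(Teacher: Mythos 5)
The proposal is correct: the dual-frame decomposition $h=\tilde{L}_{\Txo}(u^\star-u_0)+\tilde{L}\,u^\star_{\Txoper}$, combined with the injectivity-free Lemmata~\ref{lem:boundprediction} and~\ref{lem:nonsat} and the lower bound of $\Phi$ on $\Im(\tilde{L}_{\Txo})$, yields exactly the stated bound with $C_1,C_2'$ independent of $\eta$ and $\alpha$. The paper omits this proof, but its remarks (the inclusion $\ker(L_{\Txoper}^*)\subseteq\Im(\tilde{L}_{\Txo})$ and the replacement of $C_L$ by $\sqrt{a}$) indicate precisely this argument, so your route is essentially the intended one.
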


Since $\ker(L_{\Txoper}^*) \subseteq \Im(\tilde{L}_{\Txo})$, the required restricted injectivity assumption is more stringent than \INJ{\Txo}. On the positive side, the constant $C_2'$ is in general better than $C_2$. More precisely, the constant $C_{L}$, see the proof of Theorem~\ref{theo:boundestimnonframe}, is replaced with $\sqrt{a}$. Note also that coercivity of $R$ in this case allows to derive a bound similar to ours from the results in~\cite{GrasmairPositively11}. His restricted injectivity assumption is however different and our constants are sharper.

\subsection{Generalized irrepresentable condition}

In the following corollary, we provide a stronger sufficient stability condition that can be viewed as a generalization of the irrepresentable condition introduced in~\cite{fuchs-redundant-bases} when $R$ is the $\ell_1$ norm. It allows to construct dual vectors $\eta$ and $\alpha$ which obey the source condition and are computable, which in turn yield explicit constants in the bound. 

\begin{defn}
Let $T \subset \RR^P$ and $e \in \RR^P$, and denote $S=T^\perp$. Suppose that \INJ{T} is verified. Define for any $u \in \ker(L_{S})$ and $z \in \RR^{M}$ such that $\Phi^*z \in \Im(L_{S})$
\eq{
\IC_{u,z}(T,e) &=& \norma{\GT e + u_{S} + (L_{S})^+\Phi^*z}^*
}  
where
\eq{
\GT &=& (L_{S})^+(\Phi^* \Phi \AT - \Id)L_{T} \\
\AT : h \mapsto \AT h &=& \uargmin{x \in \ker(L_{S}^*)} \tfrac{1}{2}\norm{\Phi x}_2^2 - \dotp{h}{x} ~,
}
and $M^+$ is the Moore-Penrose pseudoinverse of $M$.
Let $\bar{u}$, $\bar{z}$ and $\underline{u}$ defined as
\eq{\label{eq:IC}
           &(\bar{u},\bar{z}) &= \uargmin{u \in \ker(L_{S}), \enscond{z}{\Phi^*z \in \Im(L_{S})}} \IC_{u,z}(T,e) \\
\text{and} &\underline{u} &= \uargmin{u \in \ker(L_{S})} \IC_{u,0}(T,e) ~.
}
\end{defn}
Obviously, we have
\eq{
\IC_{\bar{u},\bar{z}}(T,e) \leq \IC_{\underline{u},0}(T,e) \leq \IC_{0,0}(T,e) ~.
}
The convex programs defining $\IC_{\bar{u},\bar{z}}(T,e)$ and $\IC_{\underline{u},0}(T,e)$ can be solved using primal-dual proximal splitting algorithms whenever the proximity operator of $\norma{\cdot}$ can be easily computed~\cite{chambolle2011first}.
The criterion $\IC_{\underline{u},0}(T,e)$ specializes to the one developed in~\cite{VaiterAnalysis} when $\norma{\cdot}$ is the $\ell_1$ norm. $\IC_{0,0}(T,e)$ is a generalization of the coefficient involved in the irrepresentable condition introduced in~\cite{fuchs-redundant-bases} when $R$ is the $\ell_1$ norm, and to the one in \cite{CandesDecomposable12} for decomposable priors with $L=\Id$.


\begin{cor}
\label{cor:boundestimICnonframe}
Assume that \INJ{\Txo} is verified and $\IC_{\bar{u},\bar{z}}(\Txo,\exo) < 1$. Then, taking $\eta = \Phi\AT L_{\Txo} \exo+\bar{z}$, one can construct $\alpha$ such that \SC{x_0} is satisfied and $\norma{\alpha_{\Txoper}}^* < 1$. Moreover, the conclusion of Theorem~\ref{theo:boundestimnonframe} remains true substituting $1-\IC_{\bar{u},\bar{z}}(\Txo,\exo)$ for $1-\norma{\alpha_{\Txoper}}^*$.
\end{cor}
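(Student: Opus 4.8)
The plan is to exhibit the dual certificate explicitly and verify that it lies in the subdifferential with the required non-saturation, then invoke Theorem~\ref{theo:boundestimnonframe}. Recall that \SC{x_0} asks for $\eta \in \RR^M$ and $\alpha \in \partial\norma{\cdot}(u_0)$ with $\Phi^*\eta = L\alpha$; by Definition~\ref{def:decomposable-norm}(i) this means $\alpha_{\Txo} = \exo$ and $\norma{\alpha_{\Txoper}}^* \leq 1$, and the stability constant depends on $\alpha$ only through $1-\norma{\alpha_{\Txoper}}^*$. So the whole task reduces to: given $\IC_{\bar{u},\bar{z}}(\Txo,\exo) < 1$, build an $\alpha$ with $\alpha_{\Txo}=\exo$, $L\alpha \in \Im(\Phi^*)$ (with the prescribed $\eta$), and $\norma{\alpha_{\Txoper}}^* = \IC_{\bar{u},\bar{z}}(\Txo,\exo) < 1$.

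First I would unravel the definitions of $\AT$ and $\GT$. Since $\AT h$ minimizes $\tfrac12\|\Phi x\|_2^2 - \dotp{h}{x}$ over $x \in \ker(L_{\Txoper}^*)$, and \INJ{\Txo} makes $\Phi$ injective on this subspace, $\AT h$ is the unique solution of the normal equations: $\proj_{\ker(L_{\Txoper}^*)}(\Phi^*\Phi \AT h - h) = 0$. Setting $h = L_{\Txo}\exo$ and $\eta = \Phi\AT L_{\Txo}\exo + \bar{z}$, I compute $\Phi^*\eta = \Phi^*\Phi\AT L_{\Txo}\exo + \Phi^*\bar{z}$. Now I want this to equal $L\alpha$ for some $\alpha$ with $\alpha_{\Txo}=\exo$. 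Write $\alpha = \exo + \alpha_{\Txoper}$ with $\alpha_{\Txoper} = \proj_{\Txoper}\alpha \in \Txoper$; then $L\alpha = L_{\Txo}\exo + L_{\Txoper}\alpha_{\Txoper}$, so I need $L_{\Txoper}\alpha_{\Txoper} = \Phi^*\Phi\AT L_{\Txo}\exo + \Phi^*\bar{z} - L_{\Txo}\exo = (\Phi^*\Phi\AT - \Id)L_{\Txo}\exo + \Phi^*\bar{z}$. By the characterization of $\AT$, the vector $(\Phi^*\Phi\AT - \Id)L_{\Txo}\exo$ is orthogonal to $\ker(L_{\Txoper}^*)$, hence lies in $\Im(L_{\Txoper})$; and $\Phi^*\bar{z} \in \Im(L_{S}) = \Im(L_{\Txoper})$ by the constraint defining $\bar{z}$. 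Therefore the right-hand side lies in $\Im(L_{\Txoper})$, and we may take $\alpha_{\Txoper} = (L_{\Txoper})^+\big[(\Phi^*\Phi\AT - \Id)L_{\Txo}\exo + \Phi^*\bar{z}\big]$, which is exactly $\GT\exo + (L_S)^+\Phi^*\bar{z}$ with the $\bar{u}_S$ term being the residual freedom in $\ker(L_S)$ — i.e. $\alpha_{\Txoper} = \GT\exo + \bar{u}_S + (L_S)^+\Phi^*\bar{z}$, matching the argument of the norm in $\IC_{\bar u,\bar z}$. (Here one uses that adding any $u \in \ker(L_S)$ to $\alpha_{\Txoper}$ changes $L_{\Txoper}\alpha_{\Txoper}$ by $L_S u = 0$, so the source condition is preserved; this is why the minimization over $u$ in~\eqref{eq:IC} is legitimate.)

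With this construction, $\norma{\alpha_{\Txoper}}^* = \IC_{\bar u,\bar z}(\Txo,\exo) < 1$ by hypothesis, so $\alpha \in \partial\reg(x_0)$ via $L\alpha = \Phi^*\eta$ and $\alpha_{\Txo}=\exo$, $\norma{\alpha_{\Txoper}}^* < 1$; thus \SC{x_0} holds with strict non-saturation. All hypotheses of Theorem~\ref{theo:boundestimnonframe} are met, giving $\norm{x^\star - x_0}_2 \leq C\varepsilon$, and since the only place $\alpha$ enters the constant is through $1 - \norma{\alpha_{\Txoper}}^* = 1 - \IC_{\bar u,\bar z}(\Txo,\exo)$, the stated substitution follows. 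The main obstacle is the bookkeeping in the previous paragraph: carefully justifying that the range condition $L_{\Txoper}\alpha_{\Txoper} = (\Phi^*\Phi\AT-\Id)L_{\Txo}\exo + \Phi^*\bar z$ is solvable in $\Txoper$ (equivalently that the relevant vectors lie in $\Im(L_{\Txoper})$, using the variational characterization of $\AT$ and the feasibility constraint $\Phi^*\bar z \in \Im(L_S)$), and checking that the freedom parametrized by $\bar u \in \ker(L_S)$ is exactly what makes the minimization in the definition of $\IC_{\bar u,\bar z}$ consistent with maintaining \SC{x_0}. Everything else is a direct substitution into the already-proved stability theorem.
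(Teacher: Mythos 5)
Your proposal is correct and follows essentially the same route as the paper: you construct the same explicit certificate $\alpha = \exo + \GT\exo + \bar{u}_{S} + (L_{S})^+\Phi^*\bar{z}$, use the first-order optimality condition of $\AT$ (i.e. $\proj_{\ker(L_{\Txoper}^*)}(\Phi^*\Phi\AT - \Id)L_{\Txo} = 0$) together with $\Phi^*\bar{z} \in \Im(L_{S})$ to verify $\Phi^*\eta = L\alpha$, and identify $\norma{\alpha_{\Txoper}}^*$ with $\IC_{\bar{u},\bar{z}}(\Txo,\exo)$ before invoking Theorem~\ref{theo:boundestimnonframe}. The only cosmetic difference is that you derive $\alpha_{\Txoper}$ by solving the range equation rather than positing $\alpha$ and verifying it, which is the same computation read in the other direction.
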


\section{Proofs}
\label{sec:proofs}

\subsection{Proof of Theorem~\ref{theo:uniquenessCNS}}
A key observation is that by strong (hence strict) convexity of $\mu \mapsto \norm{y - \mu}_2^2$, all minimizers of~\eqref{eq:lassoA} share the same image under $\Phi$. Therefore any minimizer of~\eqref{eq:lassoA} takes the form $x^\star+h$ where $h \in \ker(\Phi)$. Furthermore, it can be shown by arguments from convex analysis that any proper convex function $R$ has a unique minimizer $x^\star$ (if any) over a convex set $C$ if its directional derivative satisfies
\eq{
R'(x^\star;x-x^\star) > 0, \quad x \in C, x \neq x^\star ~.
}
Applying this to~\eqref{eq:lassoA} with $C=x^\star+\ker(\Phi)$, and using the fact that the directional derivative is the support function of the subdifferential, we get that $x^\star$ is the unique minimizer of~\eqref{eq:lassoA} if $\forall~ h \in \ker(\Phi) \setminus \{0\}$
\eq{
0 < R'(x^\star;h) &=& \sup_{v \in \partial R(x^\star)} \dotp{v}{h} \\
				 &=& \sup_{\alpha \in \partial \norma{\cdot}(L^* x^\star)} \dotp{\alpha}{L^*h} \\
				 &=& \dotp{\ex}{L_{\Tx}^*h} + \sup_{\norma{\alpha_{S}}^* \leq 1} \dotp{\alpha_{S}}{L_{\Txper}^*h} \\
				 &=& \dotp{\ex}{L_{\Tx}^*h} + \norma{L_{\Txper}^*h} ~.
}
We conclude using symmetry of the norm and the fact that $\ker(\Phi)$ is a subspace. 
\endIEEEproof

\subsection{Proof of Corollary~\ref{cor:uniqueness}}
The source condition \SC{x^\star} implies that $\forall~ h \in \ker(\Phi) \setminus \{0\}$
\eq{
\dotp{h}{L\alpha} = \dotp{h}{\Phi^*\eta} = \dotp{\Phi h}{\eta} = 0 ~.
}
Moreover 
\eq{
\begin{split}
\dotp{h}{L\alpha} = \dotp{L^* h}{\alpha} = & \dotp{L_{\Tx}^* h}{\ex} + \dotp{L_{\Txper}^* h}{\alpha_{S}} ~.
\end{split}
}
Thus, applying the dual-norm inequality we get
\eq{
\dotp{L_{\Tx}^* h}{\ex} \leq \norma{L_{\Txper}^* h}\norma{\alpha_{S}}^* < \norma{L_{\Txper}^* h} ~,
}
where the last inequality is strict since $L_{\Txper}^* h$ does not vanish owing to \INJ{\Tx}, and $\norma{\alpha_{S}}^* < 1$.
\endIEEEproof

\subsection{Proof of Corollary~\ref{cor:uniquenesssep}}
We follow the same lines as the proof of Corollary~\ref{cor:uniqueness} and get
\eq{
\begin{split}
\dotp{L^* h}{\alpha} = \dotp{L_{\Tx}^* h}{\ex} &+ \dotp{L_{V}^* h}{\alpha_{V}} + \dotp{L_{W}^* h}{\alpha_{W}}~.
\end{split}
}
We therefore obtain
\eq{
\begin{split}
\dotp{L_{\Tx}^* h}{\ex} &\leq \norma{L_{V}^* h}\norma{\alpha_{V}}^* + \norma{L_{W}^* h}\norma{\alpha_{W}}^* \\
							 &< \norma{L_{V}^* h} + \norma{L_{W}^* h} 
							 = \norma{L_{\Txper}^* h}^* ~,
\end{split}
}
where we used that $h \notin \ker(L_{V}^*)$, $\norma{\alpha_{V}}^* < 1$, separability and $\norma{\alpha_{W}}^* \leq \norma{\alpha_{V}}^* + \norma{\alpha_{W}}^*= \norma{\alpha_{S}}^*\leq 1$.
\endIEEEproof

\subsection{Proof of Theorem~\ref{theo:boundestimnonframe}}
We first define the Bregman distance/divergence.
\begin{defn}
Let $D^\reg_s(x,x_0)$ be the Bregman distance associated to $R$ with respect to $s \in \partial \reg(x_0)$, 
\[
D^\reg_s(x,x_0) = \reg(x) - \reg(x_0) - \dotp{s}{x-x_0} ~.
\]
Define $D^\Aa_\alpha(u,u_0)$ as the Bregman distance associated to $\norma{\cdot}$ with respect to $\alpha \in \partial \norma{\cdot}(u_0)$.
\end{defn}
Observe that by convexity, the Bregman distance is non-negative.

\myparagraph{Preparatory lemmata}
We first need the following key lemmata.
\begin{lem}[Prediction error and Bregman distance convergence rates]
\label{lem:boundprediction}
Suppose that \SC{x_0} is satisfied. Then, for any minimizer $x^\star$ of~\eqref{eq:lassoA}, and with $\la=c\varepsilon$ for $c > 0$, we have
\eq{
D^\reg_{\Phi^*\eta}(x^\star,x_0) = D^\Aa_\alpha(L^*x^\star,L^*x_0) &\leq& \varepsilon \frac{\pa{1+c\norm{\eta}_2/2}^2}{c} ~,\\
\norm{\Phi x^\star-\Phi x_0}_2 &\leq& \varepsilon(2+c\norm{\eta}_2) ~.
}
\end{lem}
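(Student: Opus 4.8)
The plan is to use the optimality of $x^\star$ for~\eqref{eq:lassoA} to get a basic inequality, and then exploit the source condition \SC{x_0} to turn it into a control on both the Bregman distance and the prediction error. First I would write down that, since $x^\star$ is a minimizer, $\tfrac12\norm{y-\Phi x^\star}_2^2 + \la R(x^\star) \leq \tfrac12\norm{y-\Phi x_0}_2^2 + \la R(x_0)$. Substituting $y = \Phi x_0 + w$, expanding the squares, and rearranging gives
\[
\tfrac12\norm{\Phi x^\star - \Phi x_0}_2^2 \leq \dotp{w}{\Phi x^\star - \Phi x_0} + \la\pa{R(x_0) - R(x^\star)} ~.
\]
Now I would insert the source condition: with $s = \Phi^*\eta = L\alpha \in \partial R(x_0)$, the definition of the Bregman distance gives $R(x_0) - R(x^\star) = -D^R_{\Phi^*\eta}(x^\star,x_0) - \dotp{\Phi^*\eta}{x^\star - x_0} = -D^R_{\Phi^*\eta}(x^\star,x_0) - \dotp{\eta}{\Phi x^\star - \Phi x_0}$. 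Plugging this in and moving the nonnegative Bregman term to the left-hand side yields
\[
\tfrac12\norm{\Phi x^\star - \Phi x_0}_2^2 + \la D^R_{\Phi^*\eta}(x^\star,x_0) \leq \dotp{w - \la\eta}{\Phi x^\star - \Phi x_0} ~.
\]
The identity $D^R_{\Phi^*\eta}(x^\star,x_0) = D^\Aa_\alpha(L^*x^\star,L^*x_0)$ follows directly from the chain rule for Bregman distances under the linear map $L^*$ together with $s = L\alpha$, $\alpha \in \partial\norma{\cdot}(L^*x_0)$ (just expand both definitions; the linear terms match because $\dotp{L\alpha}{x^\star-x_0} = \dotp{\alpha}{L^*x^\star - L^*x_0}$).

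From here it is a short estimation. Bounding the right-hand side by Cauchy–Schwarz and the triangle inequality, $\dotp{w-\la\eta}{\Phi x^\star - \Phi x_0} \leq \pa{\norm{w}_2 + \la\norm{\eta}_2}\norm{\Phi x^\star - \Phi x_0}_2 \leq \pa{\varepsilon + \la\norm{\eta}_2}\norm{\Phi x^\star - \Phi x_0}_2$. Since both terms on the left are nonnegative, each is bounded by the right-hand side. Dropping the Bregman term, $\tfrac12\norm{\Phi x^\star - \Phi x_0}_2^2 \leq \pa{\varepsilon + \la\norm{\eta}_2}\norm{\Phi x^\star - \Phi x_0}_2$, hence $\norm{\Phi x^\star - \Phi x_0}_2 \leq 2\pa{\varepsilon + \la\norm{\eta}_2}$; substituting $\la = c\varepsilon$ gives $\norm{\Phi x^\star - \Phi x_0}_2 \leq \varepsilon\pa{2 + 2c\norm{\eta}_2}$. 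I would then tighten the prediction bound by completing the square rather than dropping the quadratic term: $\tfrac12 t^2 - bt \leq 0$ with $t = \norm{\Phi x^\star-\Phi x_0}_2$ and $b = \varepsilon + \la\norm{\eta}_2$ already gives $t \leq 2b$, but keeping $\tfrac12(t-b)^2 \leq \tfrac12 b^2$ and using it in the Bregman bound is what produces the squared factor. Concretely, $\la D^R_{\Phi^*\eta}(x^\star,x_0) \leq b\,t - \tfrac12 t^2 \leq \tfrac12 b^2 = \tfrac12\pa{\varepsilon + c\varepsilon\norm{\eta}_2}^2$, so dividing by $\la = c\varepsilon$ gives $D^R_{\Phi^*\eta}(x^\star,x_0) \leq \tfrac{\varepsilon}{2c}\pa{1 + c\norm{\eta}_2}^2$. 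To match the stated constant $\varepsilon\pa{1+c\norm{\eta}_2/2}^2/c$ I would instead split $\dotp{w}{\cdot}$ and $\dotp{\la\eta}{\cdot}$ asymmetrically — keep the full quadratic against the $w$-term only, $\tfrac12 t^2 - \varepsilon t + \la D \leq \la\norm{\eta}_2 t$, i.e. $\la D \leq -\tfrac12(t-\varepsilon)^2 + \tfrac12\varepsilon^2 + \la\norm{\eta}_2 t$, and optimize; the bookkeeping here is the only delicate part, and the cleanest route is to use $\norm{\Phi x^\star-\Phi x_0}_2 \leq \varepsilon(2+c\norm{\eta}_2)$ from the first bound to replace $t$ on the right, giving $\la D \leq \varepsilon t - \tfrac12 t^2 \leq \varepsilon\cdot\varepsilon(2+c\norm{\eta}_2) - 0$, which is too weak, so one really must carry the quadratic: $\varepsilon t - \tfrac12 t^2$ is maximized at $t=\varepsilon$ with value $\varepsilon^2/2$ only if $\la\norm{\eta}_2 = 0$, and in general the maximum of $(\varepsilon+\la\norm{\eta}_2)t - \tfrac12 t^2$ over $t \geq 0$ is $\tfrac12(\varepsilon + \la\norm{\eta}_2)^2$, attained at $t = \varepsilon+\la\norm{\eta}_2$; dividing by $\la=c\varepsilon$ gives $D \leq \tfrac{\varepsilon}{2c}(1+c\norm{\eta}_2)^2$.

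The main obstacle, then, is purely a matter of getting the sharp constants the statement advertises — $(2+c\norm{\eta}_2)$ for the prediction error and $(1+c\norm{\eta}_2/2)^2/c$ for the Bregman distance — which requires handling the cross term $\dotp{w-\la\eta}{\Phi x^\star-\Phi x_0}$ with a little more care than a crude triangle-inequality split (likely writing $w - \la\eta$ and bounding $\norm{w-\la\eta}_2$, or balancing the two contributions against the quadratic term optimally). The conceptual content — optimality inequality, insertion of the source condition, the Bregman chain rule through $L^*$, and nonnegativity of Bregman distances — is exactly as sketched above, and none of those steps presents any real difficulty.
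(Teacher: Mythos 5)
Your plan is the right one and it is, as far as one can tell, the paper's own: the authors do not spell the proof out but simply invoke the standard convergence-rate argument for sublinear regularizers under a source condition (citing Scherzer et al.), augmented by the identity $D^\reg_{\Phi^*\eta}(x,x_0)=D^\Aa_\alpha(L^*x,L^*x_0)$, which you justify correctly via $\dotp{L\alpha}{x-x_0}=\dotp{\alpha}{L^*x-L^*x_0}$. The optimality inequality, the substitution of the source condition through the Bregman distance, and the quadratic estimation are exactly the intended steps.

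The one loose end you flag --- that you obtain $\norm{\Phi x^\star-\Phi x_0}_2\leq 2(\varepsilon+\lambda\norm{\eta}_2)$ and $D\leq(\varepsilon+\lambda\norm{\eta}_2)^2/(2\lambda)$ rather than the advertised $\varepsilon(2+c\norm{\eta}_2)$ and $\varepsilon(1+c\norm{\eta}_2/2)^2/c$ --- is not something a cleverer split of the cross term will fix; your constants are the correct ones for the functional as written in \eqref{eq:lassoA}. The paper's constants are exactly what your argument produces if the data fidelity is $\norm{y-\Phi x}_2^2$ \emph{without} the factor $\tfrac12$: the basic inequality then reads
\[
t^2+\lambda D \;\leq\; 2\dotp{w}{\Phi(x^\star-x_0)}-\lambda\dotp{\eta}{\Phi(x^\star-x_0)} \;\leq\; (2\varepsilon+\lambda\norm{\eta}_2)\,t ,
\]
with $t=\norm{\Phi x^\star-\Phi x_0}_2$, from which $t\leq 2\varepsilon+\lambda\norm{\eta}_2=\varepsilon(2+c\norm{\eta}_2)$ and $\lambda D\leq\max_{t\geq 0}\bigl[(2\varepsilon+\lambda\norm{\eta}_2)t-t^2\bigr]=\tfrac14(2\varepsilon+\lambda\norm{\eta}_2)^2$, i.e. $D\leq\varepsilon(1+c\norm{\eta}_2/2)^2/c$. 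So the mismatch is a factor-of-two normalization inconsistency between \eqref{eq:lassoA} and the lemma's constants, not a gap in your reasoning; with the $\tfrac12$ present, stop at your bounds (which only change the absolute constants downstream in Theorem~\ref{theo:boundestimnonframe}), and do not pursue the asymmetric splittings you sketch at the end --- as you yourself observe, they do not close the gap.
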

The proof follows the same lines as that for any sublinear regularizer, see e.g.~\cite{ScherzerBook09}, where we additionally use the source condition \SC{x_0} and $D^\reg_{\Phi^*\eta}(x,x_0) = D^\reg_{L\alpha}(x,x_0) = D^\Aa_\alpha(L^*x,L^*x_0)$.\\

Now since $\norma{\cdot}$ is a norm, it is coercive, and thus
\[
\exists~ \cA > 0 \st \forall x \in \RR^P, ~ \norma{x} \geq \cA \norm{x}_2.
\]
We get the following inequality.
\begin{lem}[From Bregman to $\ell_2$ bound]
\label{lem:nonsat}
Suppose that \SC{x_0} holds with $\norma{\alpha_{\Txoper}}^* < 1$. Then,
\eq{
\norm{L_{\Txoper}^*(x^\star-x_0)}_2 \leq \frac{D^\Aa_\alpha(L^*x^\star,L^*x_0)}{\cA \pa{1-\norma{\alpha_{\Txoper}}^*}} ~,
}
\end{lem}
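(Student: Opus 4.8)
The plan is to lower-bound the Bregman distance $D^\Aa_\alpha(L^*x^\star,L^*x_0)$ by a quantity controlling $\norma{L_{\Txoper}^*(x^\star-x_0)}$, and then convert this to an $\ell_2$ bound via the coercivity constant $\cA$. Write $h = x^\star - x_0$ and recall that $\alpha \in \partial\norma{\cdot}(u_0)$ with $u_0 = L^*x_0$, so by decomposability $\alpha_{\Txo} = \exo$ and $\norma{\alpha_{\Txoper}}^* \leq 1$ (indeed $< 1$ is not even needed for this step). Expanding the Bregman distance,
\eq{
D^\Aa_\alpha(L^*x^\star, L^*x_0) = \norma{L^*x^\star} - \norma{L^*x_0} - \dotp{\alpha}{L^*h} ~.
}
The first step is to use convexity of $\norma{\cdot}$ together with the subgradient inequality at $u^\star = L^*x^\star$: for \emph{any} $\beta \in \partial\norma{\cdot}(u^\star)$ we have $\norma{L^*x_0} \geq \norma{L^*x^\star} + \dotp{\beta}{L^*x_0 - L^*x^\star} = \norma{L^*x^\star} - \dotp{\beta}{L^*h}$, hence $D^\Aa_\alpha(L^*x^\star,L^*x_0) \geq \dotp{\beta - \alpha}{L^*h}$. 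We are free to choose $\beta$; the natural choice is $\beta_{\Txo} = \exo = \alpha_{\Txo}$ and $\beta_{\Txoper}$ a maximizer of $\dotp{\cdot}{L_{\Txoper}^*h}$ over the dual-norm ball, i.e. $\beta_{\Txoper}$ achieving $\dotp{\beta_{\Txoper}}{L_{\Txoper}^*h} = \norma{L_{\Txoper}^*h}$ with $\norma{\beta_{\Txoper}}^* \leq 1$; by Definition~\ref{def:decomposable-norm}(i) this $\beta$ belongs to $\partial\norma{\cdot}(u^\star)$ (this is where decomposability at $u^\star$ enters). Then
\eq{
\dotp{\beta - \alpha}{L^*h} = \dotp{\beta_{\Txoper} - \alpha_{\Txoper}}{L_{\Txoper}^*h} \geq \norma{L_{\Txoper}^*h} - \norma{\alpha_{\Txoper}}^*\norma{L_{\Txoper}^*h} = \pa{1 - \norma{\alpha_{\Txoper}}^*}\norma{L_{\Txoper}^*h} ~,
}
using the dual-norm inequality on the $\alpha_{\Txoper}$ term.

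Combining the two displays gives $D^\Aa_\alpha(L^*x^\star,L^*x_0) \geq \pa{1 - \norma{\alpha_{\Txoper}}^*}\norma{L_{\Txoper}^*h}$, and then coercivity $\norma{L_{\Txoper}^*h} \geq \cA\norm{L_{\Txoper}^*h}_2$ together with $\norma{\alpha_{\Txoper}}^* < 1$ (so that we may divide) yields the claimed bound
\eq{
\norm{L_{\Txoper}^*(x^\star-x_0)}_2 \leq \frac{D^\Aa_\alpha(L^*x^\star,L^*x_0)}{\cA\pa{1-\norma{\alpha_{\Txoper}}^*}} ~.
}

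The main obstacle is verifying that the constructed $\beta$ is genuinely a subgradient of $\norma{\cdot}$ at $u^\star$: this requires that $\norma{\cdot}$ be decomposable precisely at $u^\star = L^*x^\star$ with the \emph{same} pair $(\Txo,\exo)$ as at $u_0$, which is not automatic. One must argue — as is implicit in the surrounding development (e.g. the setup of Theorem~\ref{theo:boundestimnonframe} and Lemma~\ref{lem:boundprediction}, where $D^\Aa_\alpha(L^*x^\star,L^*x_0) \to 0$) — that the relevant subspace/vector structure is inherited, or alternatively restrict attention to the component $\beta_{\Txoper}$ where only the dual-ball membership (Definition~\ref{def:decomposable-norm}(i)) and the supremum characterization (Definition~\ref{def:decomposable-norm}(ii)) are used. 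A cleaner route avoiding this subtlety is to not introduce $\beta$ at all but instead bound $\norma{L^*x^\star} = \norma{u^\star_{\Txo} + u^\star_{\Txoper}}$ from below using the decomposability-type inequality $\norma{u^\star} \geq \dotp{\exo}{u^\star_{\Txo}} + \norma{u^\star_{\Txoper}}$ valid for decomposable norms, then subtract $\norma{u_0} = \dotp{\exo}{u_{0,\Txo}}$ (since $u_0 \in \Txo$) and the linear term $\dotp{\alpha}{L^*h}$, and collect the $\Txoper$-components; this reproduces the same final estimate with the dual-norm inequality applied once to $\dotp{\alpha_{\Txoper}}{L_{\Txoper}^*h}$.
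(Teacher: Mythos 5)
Your first route, as you yourself flag, has a real gap: the subdifferential $\partial\norma{\cdot}(u^\star)$ is described by the pair $(T^\star,e^\star)$ attached to $u^\star=L^*x^\star$, not by $(\Txo,\exo)$, so the vector $\beta$ you build (with $\beta_{\Txo}=\exo$ and $\beta_{\Txoper}$ the dual-ball maximizer) has no reason to be a subgradient at $u^\star$. There is no argument in the paper that the structure is ``inherited'' by $x^\star$ — on the contrary, the whole point of the stability theorem is that it holds \emph{without} recovering $\Txo$ and $\exo$ (this is stated explicitly in the conclusion). So that route cannot be repaired by appealing to the surrounding development; it has to be abandoned.

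Your fallback route is the correct one, and it is in substance exactly the paper's proof. The paper picks $v\in\Txoper$ with $\norma{v}^*\le 1$ achieving $\dotp{v}{L_{\Txoper}^*(x^\star-x_0)}=\norma{L_{\Txoper}^*(x^\star-x_0)}$ (Definition~\ref{def:decomposable-norm}(ii)), notes that $v+\exo\in\partial\norma{\cdot}(L^*x_0)$ by Definition~\ref{def:decomposable-norm}(i) — i.e.\ the subgradient inequality is anchored at $u_0$, where the pair $(\Txo,\exo)$ is the right one by hypothesis — and writes the resulting estimate as $D^\Aa_\alpha(L^*x^\star,L^*x_0)\ge D^\Aa_\alpha(L^*x^\star,L^*x_0)-D^\Aa_{v+\exo}(L^*x^\star,L^*x_0)=\dotp{v-\alpha_{\Txoper}}{L_{\Txoper}^*(x^\star-x_0)}$, which is precisely your inequality $\norma{u^\star}\ge\dotp{\exo}{u^\star_{\Txo}}+\norma{u^\star_{\Txoper}}$ combined with $\norma{u_0}=\dotp{\exo}{u_0}$ and the dual-norm bound on $\dotp{\alpha_{\Txoper}}{L_{\Txoper}^*(x^\star-x_0)}$. (Note also that $L_{\Txoper}^*x_0=\proj_{\Txoper}u_0=0$ since $u_0\in\Txo$, so $u^\star_{\Txoper}=L_{\Txoper}^*(x^\star-x_0)$, making the two formulations literally identical.) If you rewrite the proof, present only this second argument: it uses decomposability only at $u_0$, which is all the hypotheses give you.
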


\begin{proof}
Decomposability of $\norma{\cdot}$ implies that $\exists v \in \Txoper$ such that $\norma{v}^* \leq 1$ and $\norma{L_{\Txoper}^*(x^\star-x_0)}=\dotp{L_{\Txoper}^*(x^\star-x_0)}{v}$. Moreover, $v+\exo \in \partial\norma{\cdot}(L^*x_0)$. Thus
\eq{
D^\Aa_\alpha(L^*x^\star,L^*x_0) &\geq& D^\Aa_\alpha(L^*x^\star,L^*x_0)  \\
							   && \qquad - D^\Aa_{v+\exo}(L^*x^\star,L^*x_0) \\
							   &=& \dotp{v+\exo-\alpha}{L^*(x^\star-x_0)} \\
							   &=& \dotp{v-\alpha_{\Txoper}}{L_{\Txoper}^*(x^\star-x_0)} \\ 
							   &=& \norma{L_{\Txoper}^*(x^\star-x_0)} \\
							   && \qquad - \dotp{\alpha_{\Txoper}}{L_{\Txoper}^*(x^\star-x_0)} \\
							   &\geq& \norma{L_{\Txoper}^*(x^\star-x_0)}(1-\norma{\alpha_{\Txoper}}^*) \\
							   &\geq& \cA\norm{L_{\Txoper}^*(x^\star-x_0)}_2(1-\norma{\alpha_{\Txoper}}^*) ~.
}
\end{proof}

\myparagraph{Proof of the main result}
\eq{
\norm{x^\star-x_0}_2 &\leq& \norm{\proj_{\ker(L_{\Txoper}^*)}(x^\star-x_0)}_2 \\
					&& \qquad + \norm{\proj_{\Im(L_{\Txoper}^*)}(x^\star-x_0)}_2 \\
				  &\leq& \icPhi\norm{\Phi\proj_{\ker(L_{\Txoper}^*)}(x^\star-x_0)}_2 \\
				  && \qquad +  \norm{\proj_{\Im(L_{\Txoper}^*)}(x^\star-x_0)}_2 \\
				  &\leq& \icPhi\norm{\Phi(x^\star-x_0)}_2 \\
				  && +  (1+\icPhi\norm{\Phi}_{2,2})\norm{\proj_{\Im(L_{\Txoper}^*)}(x^\star-x_0)}_2 ~,
}
where we used assumption \INJ{\Txo}, \ie 
\[
\exists~ \cPhi > 0 \st \norm{\Phi x}_2 \geq \cPhi \norm{x}_2, \quad \forall x \in \ker(L_{\Txoper}^*) ~.
\]

Since $L_{\Txoper}^*$ is injective on the orthogonal of its kernel, there exists $C_{L} > 0$ such that
\eq{
\norm{x^\star-x_0}_2 &\leq& \icPhi\norm{\Phi(x^\star-x_0)}_2 \\
				  && 	+  \tfrac{\norm{\Phi}_{2,2}+\cPhi}{C_{L}\cPhi}\norm{L_{\Txoper}^*\proj_{\Im(L_{\Txoper}^*)}(x^\star-x_0)}_2 ~.
}
Noticing that 
\[
\norm{L_{\Txoper}^*(x^\star-x_0)}_2=\norm{L_{\Txoper}^* \proj_{\Im(L_{\Txoper}^*)}(x^\star-x_0)}_2,
\] 
we apply Lemma~\ref{lem:nonsat} to get
\eq{
\norm{x^\star-x_0}_2 &\leq& \icPhi\norm{\Phi(x^\star-x_0)}_2  \\
				  && \quad +  \tfrac{\norm{\Phi}_{2,2}+\cPhi}{C_{L}\cPhi\pa{1-\norma{\alpha_{\Txoper}}^*}} D^\Aa_\alpha(L^*x^\star,L^*x_0)~.
}
Using Lemma~\ref{lem:boundprediction} yields the desired result.
\endIEEEproof

\subsection{Proof of Corollary~\ref{cor:boundestimICnonframe}}
Take
$\alpha = \exo + \GT \exo + \bar{u}_{\Txoper} + (L_{\Txoper})^+\Phi^*\bar{z}$.
First, $\alpha_{\Txo} = \exo$ since $\exo \in \Txo$ and $\Im(\GT) \subseteq \Im((L_{\Txoper})^+)=\Im(L_{\Txoper}^*)$.
Then $\norma{\alpha_{\Txoper}}^*=\IC_{\bar{u},\bar{z}}(\Txo,\exo) < 1$, whence we get that $\alpha \in \partial\norma{\cdot}(L^*x_0)$. 

Now, we observe by definition of $\AT$ that $\proj_{\ker(L_{\Txoper}^*)}(\Phi^* \Phi \AT - \Id)L_{\Txo} = 0$, which implies that $\Im((\Phi^* \Phi \AT - \Id)L_{\Txo})) \subseteq \Im(L_{\Txoper})$. In turn, $L_{\Txoper}\GT = L_{\Txoper}(L_{\Txoper})^+\pa{(\Phi^* \Phi \AT - \Id)L_{\Txo}} = \proj_{\Im(L_{\Txoper})}\pa{(\Phi^* \Phi \AT - \Id)L_{\Txo}} = (\Phi^* \Phi \AT - \Id)L_{\Txo}$. This, together with the fact that $\bar{u} \in \ker(L_{\Txoper})$ and $\Phi^*\bar{z} \in \Im(L_{\Txoper})$ yields
\eq{
L_{\Txoper} \alpha  
&=& (\Phi^* \Phi \AT - \Id)L_{\Txo} \exo + \Phi^* \bar{z} \\
&=& \Phi^*\eta - L_{\Txo} \alpha \iff \Phi^*\eta = L \alpha ~,
}
which implies that $\Phi^*\eta = L \alpha \in \partial\reg(x_0)$. We have just shown that the vectors $\alpha$ and $\eta$ as given above satisfy the source condition \SC{x_0} and the dual non-saturation condition. We conclude by applying Theorem~\ref{theo:boundestimnonframe} using \INJ{\Txo}.
\endIEEEproof

\section{Conclusion}
We provided a unified analysis of the structural properties of regularized solutions to linear inverse problems through a class of semi-norms formed by composing decomposable norms with a linear operator. We provided conditions that guarantee uniqueness, and also those ensuring stability to bounded noise. The stability bound was achieved without requiring (even partial) recovery of $\Txo$ and $\exo$. Recovery of $\Txo$ and $\exo$ for analysis-type decomposable priors and beyond is currently under investigation. Another perspective concerns whether the $\ell_2$ bound on $x^\star-x_0$ can be extended to cover more general low complexity-inducing regularizers beyond decomposable norms.


\bibliographystyle{plain}  
\bibliography{noise-robustness-decomposable}		

\begin{thebibliography}{10}

\bibitem{CandesDecomposable12}
E.~J. {Cand{\`e}s} and B.~{Recht}.
\newblock Simple bounds for recovering low-complexity models.
\newblock {\em Mathematical Programming}, pages 1--13, 2012.

\bibitem{chambolle2011first}
A.~Chambolle and T.~Pock.
\newblock A first-order primal-dual algorithm for convex problems with
  applications to imaging.
\newblock {\em Journal of Mathematical Imaging and Vision}, 40(1):120--145,
  2011.

\bibitem{Chandrasekaran12}
V.~Chandrasekaran, B.~Recht, P.~Parrilo, and A.~Willsky.
\newblock The convex geometry of linear inverse problems.
\newblock {\em Foundations of Computational Mathematics}, 12:805--849, 2012.

\bibitem{DonohoHuo01}
D.~L. Donoho and X.~Huo.
\newblock Uncertainty principles and ideal atomic decomposition.
\newblock {\em IEEE Transactions on Information Theory}, 47(7):2845--2862,
  2001.

\bibitem{fuchs-redundant-bases}
J.-J. Fuchs.
\newblock On sparse representations in arbitrary redundant bases.
\newblock {\em IEEE Trans. Info. Theory}, 50(6):1341--1344, 2004.

\bibitem{GrasmairPositively11}
M.~Grasmair.
\newblock Linear convergence rates for {T}ikhonov regularization with
  positively homogeneous functionals.
\newblock {\em Inverse Problems}, 27:075014, 2011.

\bibitem{Haltmeier12}
M.~Haltmeier.
\newblock Stable signal reconstruction via $\ell^1$-minimization in redundant,
  non-tight frames.
\newblock {\em IEEE Trans. on Sig. Proc.}, 2012.
\newblock to appear.

\bibitem{WainwrightDecomposable12}
S.~Negahban, P.~Ravikumar, M.~J. Wainwright, and B.~Yu.
\newblock A unified framework for high-dimensional analysis of {M}-estimators
  with decomposable regularizers.
\newblock {\em Statistical Science}, 27(4):538--557, December 2012.

\bibitem{ScherzerBook09}
O.~Scherzer, M.~Grasmair, H.~Grossauer, M.~Haltmeier, and F.~Lenzen.
\newblock {\em {Variational Methods in Imaging}}.
\newblock Applied Mathematical Sciences. Springer, 1st edition, 2009.

\bibitem{VaiterAnalysis}
S.~Vaiter, G.~Peyr{\'e}, C.~Dossal, and M.J. Fadili.
\newblock Robust sparse analysis regularization.
\newblock {\em to appear in IEEE Trans. Inf. Theo.}, 2012.

\end{thebibliography}

\end{document}